\newtheorem{thm}{Theorem}
\newtheorem*{cor}{Corollary}
\providecommand{\bra}[1]{{\langle#1|}}
\providecommand{\ket}[1]{{|#1\rangle}}
\providecommand{\braket}[2]{{\langle#1|#2\rangle}}
\providecommand{\ketbra}[2]{{|#1\rangle\!\langle#2|}}
\providecommand{\sandwich}[3]{{\langle#1|#2|#3\rangle}}
\newcommand{\one}{\mathbbm{1}}
\newcommand{\tr}{{\rm tr}}
\newcommand{\Phiplus}{\Phi^{\!+}}
\definecolor{nred}{rgb}{0.9,0.1,0.1}
\definecolor{nblack}{rgb}{0,0,0}
\definecolor{nblue}{rgb}{0.2,0.2,0.8}
\definecolor{ngreen}{rgb}{0.2,0.6,0.2}
\newcommand{\POVMA}{ \mathcal{A}^\text{A}_{a|s} }
\newcommand{\POVMB}{ \mathcal{B}^\text{B}_{b|t}}
\newcommand{\SqPOVMA}{ \mathcal{A}^\textup{A\!'A}_{a} }
\newcommand{\SqPOVMB}{ \mathcal{B}^\textup{BB'}_{b}}
\begin{document}

\title{Entangled states cannot be classically simulated\\ in generalized Bell experiments with quantum inputs}

\author{Denis Rosset}
\affiliation{Group of Applied Physics, University of Geneva, Switzerland}
\author{Cyril Branciard}
\affiliation{Centre for Engineered Quantum Systems and School of Mathematics and Physics, The University of Queensland, St Lucia, QLD 4072, Australia}
\author{Nicolas Gisin}
\affiliation{Group of Applied Physics, University of Geneva, Switzerland}
\author{Yeong-Cherng Liang}
\affiliation{Group of Applied Physics, University of Geneva, Switzerland}

\date{\today}

\begin{abstract}
Simulation tasks are insightful tools to compare information-theoretic resources. Considering a generalization of usual Bell scenarios where external quantum inputs are provided to the parties, we show that any entangled quantum state exhibits correlations that cannot be simulated using only shared randomness and classical communication, even when the amount and rounds of classical communication involved are unrestricted. 
We indeed construct explicit Bell-like inequalities that are necessarily satisfied by such classical resources but nevertheless violated by correlations obtainable from entangled quantum states, when measured a single copy at a time.
\end{abstract}

\maketitle

\section{Introduction}

Understanding how quantum resources compare to classical ones is of fundamental importance in the rising field of quantum information science~\cite{NielsenChuang}.
An insightful approach is to characterize, for instance, information processing tasks that can be achieved by distributed parties sharing different resources~\cite{qms,Buhrman:2010nonlocality}. Notable examples of classical resources are shared randomness and classical communication of various types (distinguished, e.g., by restrictions on the amount and/or direction of communication, and/or number of rounds performed~\cite{Buhrman:2010nonlocality,Chitambar:2012te}). Quantum resources also fall under different categories, such as separable or entangled states~\cite{Horodecki:2009gb} in particular.

A natural way to compare quantum resources against classical ones is via the simulation~\cite{Maudlin:1992bell} of {\it  quantum measurement scenarios}~\cite{qms}. There, one tries to simulate the correlations obtained by measuring a given quantum state in a Bell scenario~\cite{Bell:1964,Bell:1987speakable}, allowing the parties to use in the simulation shared randomness and possibly classical communication~\cite{Brassard:2003bc}. If no communication is allowed, the set of simulable correlations is bounded by Bell inequalities~\cite{Pitowsky:1989quantum}, whereas an infinite amount of communication allows the simulation of any correlations. Conversely, the number of classical bits exchanged by the parties can be used to quantify the entanglement of a given state that violates some Bell inequalities. For the example of the singlet state, projective measurements can be simulated with 1 bit in the {\it  worst case} scenario~\cite{Toner:2003communication}, and more general measurements with 6 bits {\it  on average}~\cite{Degorre:2005simulating}. A more general model by Massar {\it  et al.}~\cite{Massar:2001ug} simulates arbitrary measurements on arbitrary bipartite states using a finite number of bits on average~\cite{Regev:2010hp,Vertesi:2009ew}. A few results are known for the multipartite case: for instance, projective equatorial measurements on the multipartite Greenberger-Horne-Zeilinger state~\cite{Greenberger:1989going} can be simulated using shared randomness and finite communication on average~\cite{Bancal:2010ho}; in the tripartite case, 3 bits are always sufficient~\cite{Branciard:2011dx}.

Surprisingly, only small amounts of communication are necessary in the tasks mentioned above, which we will refer to as {\it  standard simulation tasks}. In fact, even the correlations obtained from entanglement swapping~\cite{Zukowski:1993fs}(a more complex process involving both entangled states and entangled measurements) can be simulated using only finite communication and uncorrelated shared randomness~\cite{Branciard:1429040}. Of course, it should not be forgotten that there exist entangled states which --- when measured one copy at a time --- can be simulated using solely shared randomness (see, e.g. Refs.~\cite{Werner:1989zz,Barrett:2002gu}); such states cannot be distinguished from separable states in a standard simulation task. 
Nevertheless, these states have an advantage over separable states for quantum teleportation{~\cite{Popescu:1994bell}}, and their nonlocal behavior can be demonstrated using more subtle Bell tests (see, e.g., Refs.~\cite{Masanes:2008hidden,Liang:2012} and references therein). Note that all these tasks involve single copies of quantum states. Other scenarios allowing joint measurement on multiple copies of a quantum state~\cite{Liang:2006fe,Palazuelos:2012fz,Cavalcanti:2012vg} nevertheless enable states with Bell-local correlations to exhibit non-local correlations. We will however not consider such possibilities in our paper, and stick to the problem of simulating correlations produced by single copies of quantum states.

\begin{figure}
  \includegraphics{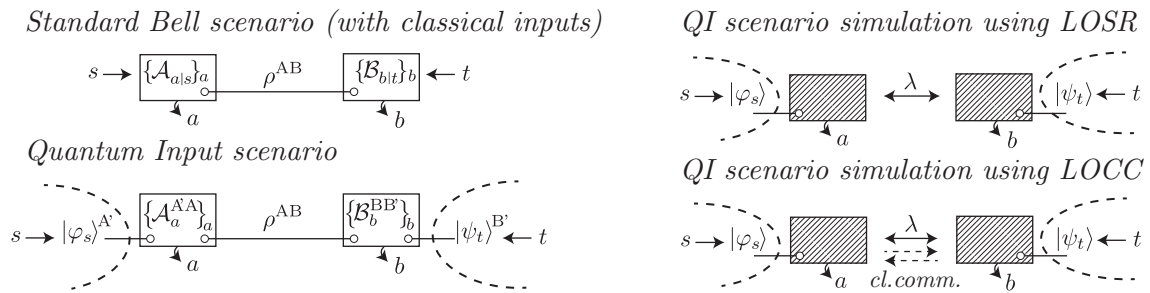}
  \caption{\label{Fig:scenarios}In a standard bipartite Bell scenario, Alice and Bob receive or select randomly classical inputs $s$, $t$ indicating measurements to be performed on the shared state $\rho^\text{AB}$, with outcomes $a$, $b$.
    In a quantum input (QI) scenario, the inputs are quantum states $\ket{\varphi_s}$, $\ket{\psi_t}$, prepared and provided by an external agent, to be measured jointly but locally by Alice and Bob with their respective part of the state $\rho^\text{AB}$, producing outcomes $a$, $b$. It is assumed that there is no leak of information about the indices $s$, $t$.
    We then consider the task of simulating QI scenarios using classical resources, namely local operations assisted by shared randomness (LOSR), as well as classical communication (LOCC).}
\end{figure}

Recently, Buscemi~\cite{Buscemi:2012all} generalized standard Bell scenarios by using {\it quantum} inputs, and showed that, in those scenarios, single copies of any entangled state can produce correlations which are non-simulable using only shared randomness and local operations (see Figure~\ref{Fig:scenarios}). Naturally, these scenarios call also for a generalization of the standard simulation task which we will refer to as the {\it quantum input simulation task}. These simulation tasks are defined in Section~\ref{Sec:SimulationTasks}. We prove our main result in Section~\ref{Sec:NonSimulability}, namely that in the quantum input framework, it is {\it impossible} to simulate {\it any} entangled state using shared randomness and even unrestricted classical communication --- thus establishing a new feature of quantum entanglement. We demonstrate this by constructing, for any entangled state, a generalized Bell scenario and an inequality that is necessarily satisfied by classical resources, but which can be violated by the given entangled state. Explicit examples of such inequalities for Werner states are given in Section~\ref{Sec:WernerStates}.

\section{Simulation tasks}
Before proving our main result, let us generalize the standard simulation tasks to allow for quantum inputs. We shall start by recalling from Ref.~\cite{qms} the simulation task inspired from a standard Bell test.
\label{Sec:SimulationTasks}
\subsection{Standard simulation task}
In the bipartite case, we define via the triple $(\rho^\text{AB}, \{\mathcal{A}^\text{A}_{a|s}\},\{\mathcal{B}^\text{B}_{b|t}\})$ the scenario to be simulated, where $\rho^\text{AB}$ is a bipartite entangled state shared between Alice and Bob, while $\{ \mathcal{A}^\text{A}_{a|s}\}$ and $\{ \mathcal{B}^\text{B}_{b|t}\}$ are positive operator-valued measure (POVM) elements~\cite{NielsenChuang} describing the respective measurements to be performed on their subsystems; each POVM, labeled by settings $s$ or $t$, has outcomes $a$ or $b$. The simulation task is defined as follows: Alice and Bob are given beforehand classical descriptions of the triple specifying the scenario and have access to shared randomness. In each round of the simulation, they each receive a classical description of their input $s$ or $t$ but do not know the other party's input. Their task is to produce --- possibly after some rounds of classical communication with each other --- classical outputs $(a,b)$ such that they reproduce exactly the joint conditional probability distribution

\begin{equation}
\label{Eq:P_standard}
P_\rho(a,b|s,t) = \tr \left [ \left ( \mathcal{A}^\text{A}_{a|s} \otimes \mathcal{B}^\text{B}_{b|t} \right ) \cdot \rho^\text{AB} \right ],
\end{equation}
as predicted by quantum mechanics.

It is well known that some quantum correlations can violate Bell inequalities, and as such cannot be simulated using only shared randomness. The canonical no-go result is represented by the Clauser-Horne-Shimony-Holt (CHSH)~\cite{Clauser:1969ff} Bell scenario, where each of $(s,t,a,b)$ is binary. In this scenario, the CHSH inequality defines constraints that have to be satisfied by Bell-local correlations~\cite{Bell:1964,Bell:1987speakable,Norsen:2006di}; these constraints can be maximally violated with judicious choices of measurements $\{\POVMA\}$, $\{\POVMB\}$ when Alice and Bob share for instance a two-qubit singlet state $\rho^\text{AB} = \ketbra{\Psi^-}{\Psi^-}$, with $\ket{\Psi^-} = \frac{1}{\sqrt{2}}(\ket{01}-\ket{10})$.
Note that only entangled states $\rho^\text{AB}$ can generate Bell-nonlocal correlations; entanglement is however not a sufficient resource, as some entangled states can only generate Bell-local correlations when measured one copy at a time~\cite{Werner:1989zz,Barrett:2002gu}.

When any amount of communication is allowed, a simple strategy to achieve the simulation task is the following. Alice communicates the input that she receives to Bob, and then they produce outputs in accordance with some pre-established strategy via shared randomness. However, there may be more efficient strategies when more inputs are considered: for instance, the simulation of all projective measurements on the singlet state can be achieved using only one bit of communication in addition to shared randomness~\cite{Toner:2003communication}.

\subsection{Quantum Input (QI) simulation task} 
Let us now consider a variant inspired by Buscemi's work~\cite{Buscemi:2012all}. The scenario of a quantum-input simulation task is defined via the 5-tuple $(\rho^\text{AB},\{\ket{\varphi_s}^{\text{A'}}\},\{\ket{\psi_t}^{\text{B'}}\},\{\SqPOVMA\},\{\SqPOVMB\})$ where the superscripts identify the relevant subsystems (see Figure~\ref{Fig:scenarios}); the sets $\{\ket{\varphi_s}^{\text{A'}}\}$,$\{\ket{\psi_t}^{\text{B'}}\}$ specify the quantum input states prodived to Alice and Bob, while $\{\SqPOVMA\}$ and $\{\SqPOVMB\}$ are POVM elements acting respectively on the systems A'A and BB'. As with the standard simulation task, Alice and Bob are given beforehand classical descriptions of the 5-tuple specifying the scenario and are assumed to have access to shared randomness.  However, instead of a classical description of the inputs $s$ and $t$, in each round of the simulation task, Alice and Bob now receive respectively from external state preparation devices the quantum states $\ket{\varphi_s}^{\text{A'}}$ and $\ket{\psi_t}^{\text{B'}}$ to be measured jointly with $\rho^\text{AB}$. The goal here is again for Alice and Bob to produce, possibly with the help of classical communication, outputs $a$ and $b$ such that they reproduce exactly the conditional probability distribution

\begin{equation}
\label{Eq:P_QI_complete}
P_\rho(a,b \, | \, \ket{\varphi_s},\ket{\psi_t}) = \, \tr \big [ \! \left ( \mathcal{A}^\text{A\!'A}_a \! \otimes \mathcal{B}^\text{BB'}_b \right ) \cdot \left ( \ketbra{\varphi_s}{\varphi_s}^\text{A\!'} \! \otimes \rho^\text{AB} \! \otimes \ketbra{\psi_t}{\psi_t}^\text{B'} \right ) \! \big ],
\end{equation}
as predicted by quantum mechanics.

The correlations~\eqref{Eq:P_QI_complete} can also be written in a similar form as~\eqref{Eq:P_standard}, namely
\begin{equation}
\label{Eq:P_QI_effective}
P_\rho(a,b \, | \, \ket{\varphi_s},\ket{\psi_t}) = \tr \left [ \left ( \mathcal{A}^\textup{A}_{a|\ket{\varphi_s}} \otimes \mathcal{B}^\textup{B}_{b|\ket{\psi_t}} \right ) \cdot \rho^\text{AB} \right ],
\end{equation}
where the operators
\begin{equation}
\label{Eq:POVM_AB}
	\mathcal{A}^\textup{A}_{a|\ket{\varphi_s}} =  ^\textup{A\!'}\!\bra{\varphi_s} \!\otimes\! \mathbbm{1} \cdot  \mathcal{A}^\textup{A\!'A}_a  \cdot \ket{\varphi_s}^{\!\textup{A\!'}} \otimes\! \mathbbm{1}, \qquad
	\mathcal{B}^\textup{B}_{b|\ket{\psi_t}} = \mathbbm{1} \!\otimes \, ^\textup{B'}\!\!\bra{\psi_t} \cdot  \mathcal{B}^\textup{BB'}_b \cdot \mathbbm{1} \!\otimes\! \ket{\psi_t}^{\!\textup{B'}} 
\end{equation}
describe Alice and Bob's effective POVMs acting on $\rho^\text{AB}$, for each input state $\ket{\varphi_s},\ket{\psi_t}$. Here, in contrast to standard simulation tasks~\eqref{Eq:P_standard}, Alice and Bob do not know {\it a priori} which effective POVMs $\{\mathcal{A}^\textup{A}_{a|\ket{\varphi_s}}\}_a$ and $\{\mathcal{B}^\textup{B}_{b|\ket{\psi_t}}\}_b$ they should simulate, as they do not know (and may not able to determine with certainty) the classical indices $s$,$t$ of their quantum inputs $\ket{\varphi_s}$,$\ket{\psi_t}$: these states are indeed chosen randomly and prepared by external devices they do not control\footnote{This contrasts with usual Bell scenarios, where there is no difference whether indices $s$ and $t$ are provided by an external agent or randomly chosen by the parties--- provided that such a choice is independent of any underlying variables describing the system\cite{Hall:2010hc}.}.

Note however that when all Alice's (respectively Bob's) input states $\{\ket{\varphi_s}\}$ ($\{\ket{\psi_t}\}$) are orthogonal to one another, Alice (Bob) can perfectly distinguish them; then the QI simulation task simply reduces to the standard one. As such, there is a distinction between the two tasks only when the quantum input states are non-orthogonal, and therefore non-distinguishable. Note that the non-orthogonality of the set of quantum input states only implies, according to quantum theory, that the received state cannot be determined with certainty. If the states in $\{\ket{\varphi_s} \}_s$ are non-orthogonal but still linearly independent, they can be unambiguously discriminated with a non-zero probability of getting a conclusive answer~\cite{Chefles:1998hd}.

\section{Non-simulability of entangled quantum states}
\label{Sec:NonSimulability}
We already know from Buscemi's result~\cite{Buscemi:2012all} that all entangled quantum states can produce correlations that are non-simulable when both parties perform only arbitrary local operations assisted by shared randomness (LOSR). Clearly, a natural follow-up question is whether classical communication could help in the simulation of these correlations, in the paradigm of local operations and classical communication (LOCC). We answer below this question by the negative, showing that any entangled state can produce correlations that cannot be reproduced using LOCC. To do so, we first define a canonical QI simulation task relevant for entangled states of dimension $d_A \times d_B$. We then construct, for any such state, an explicit Bell-like inequality violated by the correlations of the given entangled state but otherwise satisfied by LOCC resources.

\subsection{Canonical QI simulation task} 
We consider a state $\rho^\text{AB}$ of dimension $d_A\times d_B$, and prescribe the following: Alice and Bob perform local measurements $\{\SqPOVMA\},\{\SqPOVMB\}$ with binary outcomes $a,b$, where the outcome $a$ or $b=1$ corresponds to the successful projection onto the maximally entangled state $\ket{\Phiplus_d} = \sum_{k=0}^{d-1} \ket{kk} / \sqrt{d}$. Thus:
\begin{eqnarray}
\mathcal{A}^\textup{A\!'A}_1 = \ketbra{\Phiplus_{d_A}}{\Phiplus_{d_A}},  \mathcal{A}^\textup{A\!'A}_0  =  \mathbbm{1} - \mathcal{A}^\textup{A\!'A}_1 , \qquad
\mathcal{B}^\textup{BB'}_1  =  \ketbra{\Phiplus_{d_B}}{\Phiplus_{d_B}}, \mathcal{B}^\textup{BB'}_0  =  \mathbbm{1} - \mathcal{B}^\textup{BB'}_1. \label{Eq:BellMeasurementAB}
\end{eqnarray}

The input states are chosen to have the same dimensions as the respective subsystem of A and B in $\rho^\text{AB}$, $\ket{\varphi_s} \in \mathbbm{C}^{d_A}$, $\ket{\psi_t} \in \mathbbm{C}^{d_B}$, and constructed such that the corresponding density matrices $\{ \ketbra{\varphi_s}{\varphi_s} \}$, $\{ \ketbra{\psi_t}{\psi_t} \}$ span the space of linear operators acting on $\mathbbm{C}^{d_A}$, $\mathbbm{C}^{d_B}$, as is also done in the proof and example of~\cite{Buscemi:2012all}. Such sets can always be constructed using the minimal number $d_A^2, d_B^2$ of elements, with $s=1,\ldots, d_A^2$ and $t=1,\ldots, d_B^2$. Together with $\rho^\text{AB}$, these elements define completely our quantum-input simulation task.

The effective POVMs applied to $\rho^\text{AB}$ are then described [from Eq.~\eqref{Eq:POVM_AB}] by
\begin{eqnarray}
\mathcal{A}^\textup{A}_{1|\ket{\varphi_s}}  \! = \frac{ \ketbra{\varphi_s}{\varphi_s}^{\!\top}}{d_A}, & &
\mathcal{A}^\textup{A}_{0|\ket{\varphi_s}}  \! = \mathbbm{1} - \frac{ \ketbra{\varphi_s}{\varphi_s}^{\!\top}}{d_A}, \label{Eq:POVM_A_canonical}\\
\mathcal{B}^\textup{B}_{1|\ket{\psi_t}}  \! = \frac{ \ketbra{\psi_t}{\psi_t}^{\!\top}}{d_B}, & &
\mathcal{B}^\textup{B}_{0|\ket{\psi_t}}  \! = \mathbbm{1} - \frac{ \ketbra{\psi_t}{\psi_t}^{\!\top}}{d_B}, \label{Eq:POVM_B_canonical}
\end{eqnarray}
where $^{\top}$ indicates transposition in the computational bases $\{ \ket{k} \}$ of $\mathbbm{C}^{d_A}$ and $\mathbbm{C}^{d_B}$.

\subsubsection*{Canonical QI simulation task of the singlet state}
As a concrete example, consider the case where Alice and Bob share the singlet state $\rho^\text{AB} = \ketbra{\Psi^-}{\Psi^-}$. The input states can for instance be chosen as the vertices of a regular tetrahedron on the surface of the Bloch sphere. Using $\vec{\sigma} = (\sigma_1, \sigma_2, \sigma_3)$ the vector of Pauli matrices, and 
$\vec{v}_1 = (1,1,1)/\sqrt{3}$, $\vec{v}_2 = (1,-1,-1) /\sqrt{3}$, $\vec{v}_3 = (-1,1,-1) /\sqrt{3}$, $\vec{v}_4 = (-1,-1,1) /\sqrt{3}$, we define, for $s,t=1, \ldots, 4$:
\begin{equation}
\label{Eq:InputStates}
\ketbra{\varphi_s}{\varphi_s} = \frac{\mathbbm{1} + \vec{v}_s \cdot \vec{\sigma}}{2}, \quad \ketbra{\psi_t}{\psi_t} = \frac{\mathbbm{1} + \vec{v}_t \cdot \vec{\sigma}}{2}.
\end{equation}
The resulting correlations are then:
\begin{eqnarray}
P_{\ket{\Psi^-}}(a,b \, | \, \ket{\varphi_s},\ket{\psi_t}) = \left\{ \!
\begin{array}{cl}
\! \dfrac{2-(a+b)}{4} & \!\! \text{ if } s=t \\[0.3cm]
\! \dfrac{7-5a-5b+4ab}{12} & \!\! \text{ otherwise}.
\end{array}
\right.
\label{Eq:singlet_correlations}
\end{eqnarray}

This means that Alice and Bob must {\it  never} output $a=b=1$ when their input states are identical, but must otherwise produce this combination of outputs with some non-zero probability. Recall that Alice and Bob only have access to their respective quantum states $\ket{\varphi_s}$ and $\ket{\psi_t}$ and not their classical labels $s$ and $t$. Since these input states are not linearly independent, they cannot be unambiguously distinguished from one another~\cite{Chefles:1998hd}, i.e. Alice and Bob are bound to make mistakes if they try to guess the classical label $s$ and $t$ in some rounds of the simulation; thus the task of reproducing the correlations in Eq.~\eqref{Eq:singlet_correlations} cannot be achieved perfectly if Alice and Bob only make use of shared randomness and classical communication.

\subsection{Bell-like inequalities for any entangled state}
The no-go result given in the previous paragraph is in fact not specific to pure two-qubit maximally-entangled states, but rather is a general feature of all entangled states, as we shall demonstrate by constructing an explicit Bell-like inequality in the following Theorem and proof:

\begin{thm}
\label{thm1}

For any entangled quantum state $\rho$ of dimension $d_A \times d_B$, and sets of inputs states $\{ \ket{\varphi_s} \}_{s=1}^{d_A^2}$, $\{ \ket{\psi_t} \}_{t=1}^{d_B^2}$ whose corresponding density matrices are informationally complete (i.e. span the space of linear operators acting on $\mathbbm{C}^{d_A}$, $\mathbbm{C}^{d_B}$ respectively), there exist real coefficients $\beta^\rho_{st}$ defining a Bell-like inequality
\begin{equation}
\label{Eq:InequalityRho}
I^\rho(P) := \sum_{st} \ \beta^\rho_{st} \ P(1,1 \, | \, \ket{\varphi_s},\ket{\psi_t}) \ \ge \ 0,
\end{equation}
which is satisfied by all correlations obtainable from LOCC, but is violated by the quantum correlation $P_\rho(a,b \, | \, \ket{\varphi_s},\ket{\psi_t})$ obtained from the entangled state $\rho$, when Alice and Bob perform projections onto a maximally entangled state as in~\eqref{Eq:BellMeasurementAB}.

\end{thm}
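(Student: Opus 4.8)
The plan is to separate the quantum correlation from all LOCC-achievable ones by means of an entanglement witness, using the informational completeness of the input states to convert that witness into the desired coefficients $\beta^\rho_{st}$.

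First I would pin down the structure of LOCC correlations. In the simulation Alice and Bob share \emph{no} entanglement: they hold only shared randomness together with their product quantum inputs $\ketbra{\varphi_s}{\varphi_s}^\text{A'}\otimes\ketbra{\psi_t}{\psi_t}^\text{B'}$. Any sequence of local operations (with ancillas) and classical communication therefore realises, for the joint outcome $(1,1)$, an effective measurement operator $W$ acting on these inputs that is a \emph{separable} operator, $W=\sum_\lambda A_\lambda\otimes B_\lambda$ with $A_\lambda,B_\lambda\ge 0$. This is the standard containment of LOCC measurements in the separable class; it holds because each outcome probability is linear in each party's local input and because classical messages merely introduce further shared classical variables. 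Consequently every LOCC correlation has the bilinear form $P(1,1\,|\,\ket{\varphi_s},\ket{\psi_t})=\tr[(\ketbra{\varphi_s}{\varphi_s}\otimes\ketbra{\psi_t}{\psi_t})\,W]$ with $W$ separable and positive.

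Next I would identify the operator representing the prescribed quantum strategy. Combining Eqs.~\eqref{Eq:POVM_A_canonical}--\eqref{Eq:POVM_B_canonical} with the identity $\tr[X^\top Y]=\tr[X Y^\top]$, the quantum correlation reads $P_\rho(1,1\,|\,\ket{\varphi_s},\ket{\psi_t})=\tr[(\ketbra{\varphi_s}{\varphi_s}\otimes\ketbra{\psi_t}{\psi_t})\,W_\rho]$ with $W_\rho=\rho^\top/(d_A d_B)$, the full transpose of $\rho$. Since the transpose maps product states to product states and is invertible, it preserves separability in both directions; hence $\rho^\top$ is entangled exactly because $\rho$ is, so $W_\rho$ lies outside the separable cone while still being a legitimate (positive, unit-trace) density operator.

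Finally I would invoke separation. As $\rho^\top$ is entangled, there exists (Horodecki criterion / Hahn--Banach) an entanglement witness $\mathcal{W}$, i.e.\ a Hermitian operator with $\tr[\mathcal{W}\,\sigma]\ge 0$ for every separable $\sigma$ but $\tr[\mathcal{W}\,\rho^\top]<0$. Because the density matrices $\{\ketbra{\varphi_s}{\varphi_s}\}$ and $\{\ketbra{\psi_t}{\psi_t}\}$ are informationally complete, the products $\{\ketbra{\varphi_s}{\varphi_s}\otimes\ketbra{\psi_t}{\psi_t}\}$ span the Hermitian operators on $\mathbbm{C}^{d_A}\otimes\mathbbm{C}^{d_B}$, so I can choose real coefficients $\beta^\rho_{st}$ with $\sum_{st}\beta^\rho_{st}\,\ketbra{\varphi_s}{\varphi_s}\otimes\ketbra{\psi_t}{\psi_t}=\mathcal{W}$. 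Then $I^\rho(P)=\tr[\mathcal{W}\,W]$: for any LOCC correlation $W$ is separable and positive, giving $I^\rho\ge 0$, whereas for $\rho$ one gets $I^\rho(P_\rho)=\tr[\mathcal{W}\,\rho^\top]/(d_A d_B)<0$, which establishes the claimed violation. The main obstacle is the first step --- arguing rigorously that \emph{every} LOCC protocol (arbitrary rounds and directions of communication, local ancillas, adaptive measurements) yields a separable $W$; once this containment is secured, the witness separation is immediate.
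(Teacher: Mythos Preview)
Your argument is correct and follows essentially the same route as the paper: contain LOCC within separable operations, then use informational completeness to expand an entanglement-detecting operator in the product basis $\{\ketbra{\varphi_s}{\varphi_s}\otimes\ketbra{\psi_t}{\psi_t}\}$ and read off the coefficients $\beta^\rho_{st}$. The only cosmetic difference is that the paper phrases the separation via a positive-but-not-completely-positive map $\Lambda$ acting on $\rho$ (and its dual on $\ketbra{\xi}{\xi}$), whereas you pass to $\rho^\top$ and invoke an entanglement witness $\mathcal{W}$ directly; the paper itself notes these two formulations are equivalent.
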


\begin{cor}
Any entangled state can generate correlations that cannot be simulated by classical resources in a Quantum Input simulation task.
\end{cor}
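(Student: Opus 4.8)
The plan is to reduce the whole statement to the theory of \emph{entanglement witnesses}, exploiting the fact that classical resources necessarily act \emph{separably} on the two quantum inputs. First I would rewrite the relevant quantum correlation in operator form: from the effective POVMs~\eqref{Eq:POVM_A_canonical}--\eqref{Eq:POVM_B_canonical},
\begin{equation}
P_\rho(1,1\,|\,\ket{\varphi_s},\ket{\psi_t}) = \frac{1}{d_Ad_B}\,\tr\!\big[\big(\ketbra{\varphi_s}{\varphi_s}^{\!\top}\!\otimes\ketbra{\psi_t}{\psi_t}^{\!\top}\big)\,\rho\big],
\end{equation}
so that, for any real $\beta^\rho_{st}$, the left-hand side of~\eqref{Eq:InequalityRho} reads $I^\rho(P_\rho)=\tfrac{1}{d_Ad_B}\tr[W\rho]$ with $W:=\sum_{st}\beta^\rho_{st}\,\ketbra{\varphi_s}{\varphi_s}^{\!\top}\!\otimes\ketbra{\psi_t}{\psi_t}^{\!\top}$. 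Since $\rho$ is entangled, there exists a Hermitian entanglement witness $W$ obeying $\tr[W\sigma]\ge 0$ for all separable $\sigma$ while $\tr[W\rho]<0$. Because the density matrices $\{\ketbra{\varphi_s}{\varphi_s}\}_s$ and $\{\ketbra{\psi_t}{\psi_t}\}_t$ are informationally complete, their transposes form real operator bases, and the products $\{\ketbra{\varphi_s}{\varphi_s}^{\!\top}\!\otimes\ketbra{\psi_t}{\psi_t}^{\!\top}\}_{st}$ a real basis of the Hermitian operators on $\mathbbm{C}^{d_A}\!\otimes\mathbbm{C}^{d_B}$; expanding $W$ in this basis yields the (necessarily real) coefficients that I take as the $\beta^\rho_{st}$. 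With this choice the entangled state violates the inequality, $I^\rho(P_\rho)=\tfrac{1}{d_Ad_B}\tr[W\rho]<0$.

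It then remains to show $I^\rho(P)\ge 0$ for every LOCC correlation. The decisive structural fact of the QI scenario is that Alice and Bob hold the inputs \emph{locally} and share only randomness and classical messages; hence the effective measurement they realise on the product input $\ket{\varphi_s}^{\text{A'}}\!\otimes\ket{\psi_t}^{\text{B'}}$ is an LOCC---therefore separable---POVM. Writing its ``$(1,1)$'' element as $M_{11}=\sum_\lambda A_\lambda\otimes B_\lambda$ with $A_\lambda,B_\lambda\ge 0$ gives $P(1,1\,|\,\ket{\varphi_s},\ket{\psi_t})=\tr[M_{11}\,(\ketbra{\varphi_s}{\varphi_s}\otimes\ketbra{\psi_t}{\psi_t})]$. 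Substituting into~\eqref{Eq:InequalityRho}, and using $\sum_{st}\beta^\rho_{st}\,\ketbra{\varphi_s}{\varphi_s}\otimes\ketbra{\psi_t}{\psi_t}=W^{\!\top}$ together with $\tr[XY^{\!\top}]=\tr[X^{\!\top}Y]$, yields
\begin{equation}
I^\rho(P)=\tr\!\big[M_{11}\,W^{\!\top}\big]=\tr\!\big[W\,M_{11}^{\!\top}\big].
\end{equation}
Since $M_{11}^{\!\top}=\sum_\lambda A_\lambda^{\!\top}\otimes B_\lambda^{\!\top}$ is again a positive separable operator, the defining property of the witness gives $I^\rho(P)=\tr[W\,M_{11}^{\!\top}]\ge 0$, which together with the previous paragraph proves the Theorem and its Corollary.

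The main obstacle---and the conceptual core of the result---is the LOCC step: one must argue rigorously that \emph{any} protocol using shared randomness together with unrestricted rounds of classical communication induces a \emph{separable} effective POVM on the pair of quantum inputs. This is exactly where quantum inputs beat ordinary Bell tests: for classical inputs no such constraint applies, but since classical communication cannot entangle Alice's local handling of $\ket{\varphi_s}$ with Bob's local handling of $\ket{\psi_t}$, the induced measurement stays separable and the problem collapses onto entanglement-witness theory. A minor auxiliary point to check is that global transposition maps positive separable operators to positive separable operators (it acts within each tensor factor), so that $M_{11}^{\!\top}$ is a legitimate argument of the witness $W$.
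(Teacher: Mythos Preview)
Your argument is correct and follows essentially the same strategy as the paper's proof of Theorem~\ref{thm1}: define the coefficients $\beta^\rho_{st}$ so that the Bell-like functional evaluates to the trace of a suitable operator against $\rho$, then use separability of any LOCC-induced POVM on the input pair to bound the classical value. The only difference is cosmetic---you phrase the construction directly in terms of an entanglement witness $W$, whereas the paper builds the same object as $\one\otimes\Lambda^{\!*}(\ketbra{\xi}{\xi})$ from a positive-but-not-CP map; the paper itself notes right after the proof that the two formulations are equivalent and that the witness version is the one used in Ref.~\cite{GAP:2012mdiews}.
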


\begin{proof}[Proof of Theorem~\ref{thm1}]
We use the fact that for any entangled state $\rho$, there exists a positive (but not completely positive) map $\Lambda$, such that $\one \otimes \Lambda (\rho)$ is not positive~\cite{Horodecki:1996separability}; it has at least one eigenstate $\ket{\xi}$ with negative eigenvalue $\lambda < 0$.
Let us denote by $\Lambda^{\!*}$ the (positive) dual map of $\Lambda$ (defined such that $\text{tr}[M \cdot \Lambda^{\!*}(N)] = \text{tr}[\Lambda(M) \cdot N]$ for all linear operators $M$, $N$), and let us decompose the Hermitian operator $\one \otimes \Lambda^{\!*} (\ketbra{\xi}{\xi})$ as follows:

\begin{equation}
\label{Eq:defbetas}
\one \otimes \Lambda^{\!*} (\ketbra{\xi}{\xi}) = \sum_{st} \ \beta^\rho_{st} \ \ketbra{\varphi_s}{\varphi_s}^{\!\top\!} \! \otimes \! \ketbra{\psi_t}{\psi_t}^{\!\top\!} ,
\end{equation}
where $\beta^\rho_{st}$ are real coefficients. Note that such a decomposition is always possible\footnote{Informationally complete sets of input states are not always required. What is required in our proof is that at least one of the eigenstates $\ket{\xi}$ of $\one \otimes \Lambda (\rho)$ with negative eigenvalue is such that $\one \otimes \Lambda^{\!*} (\ketbra{\xi}{\xi})$ can be decomposed as in~\eqref{Eq:defbetas}; the information-completeness assumption guarantees that this is indeed possible, but this assumption is not always necessary. For example in dimensions $d_A \neq d_B$, the state $\ket{\xi}$ has a Schmidt decomposition~\cite{Schmidt:1907gt,Ekert:1995bw} of the form $\ket{\xi} = \sum_{k=1}^n \alpha_k \ket{kk}$ with $n \le \min(d_A, d_B)$ elements, providing a decomposition for $\one \otimes \Lambda^{\!*} (\ketbra{\xi}{\xi})$ using $n^2$ input states for both Alice and Bob, by restricting these input states to the relevant subspace of dimension $n$.}
since, by assumption, the Hermitian operators $\left \{ \ketbra{\varphi_s}{\varphi_s}^{\!\top\!} \right \}_{s=1}^{d_A^2}$, $\left \{ \ketbra{\psi_t}{\psi_t}^{\!\top\!} \right \}_{t=1}^{d_B^2}$ used in Theorem~\ref{thm1} form a basis for Hermitian operators acting respectively on $\mathbbm{C}^{d_A},\mathbbm{C}^{d_B}$. The coefficients $\beta^\rho_{st}$ introduced in~\eqref{Eq:defbetas} are used to define the linear combination $I^{\rho}(P)$ in~\eqref{Eq:InequalityRho}.

\medskip

\paragraph*{(i)}To show that inequality~\eqref{Eq:InequalityRho} holds true for all correlations $P(a,b \, | \, \ket{\varphi_s},\ket{\psi_t})$ that can be obtained from LOCC operations, we shall now recall that all classes of LOCC operations, distinguished by the amount and/or number of rounds of communication~\cite{Chitambar:2012te}, are included in the set of so-called separable operations~\cite{Rains:1997sep,Vedral:1998qent,Bennett:1999qnwe,Chitambar:2012te}. As a result, the set of correlations obtainable by performing arbitrary LOCC operations on the input states $\ket{\varphi_s},\ket{\psi_t}$ ({\em LOCC correlations}, for short) is included in the set of correlations obtainable from separable measurements on $\ket{\varphi_s},\ket{\psi_t}$. Our strategy is to prove that inequality~\eqref{Eq:InequalityRho} holds for all separable measurements, which in turn, implies that it also holds for all LOCC correlations.

Let then $P_\textup{SEP}(a,b \, | \, \ket{\varphi_s},\ket{\psi_t})$ be the correlations obtained from some separable measurements on $\ket{\varphi_s},\ket{\psi_t}$. In our QI simulation task, such correlations take the form of:

\begin{equation}
P_\text{SEP}(a,b \, | \, \ket{\varphi_s},\ket{\psi_t}) = \text{tr}[ \ketbra{\varphi_s}{\varphi_s} \!\otimes\! \ketbra{\psi_t}{\psi_t} \cdot \Pi_{ab} ],
\end{equation}
where the $\Pi_{ab}$ are some POVM elements corresponding to outcomes $a$ and $b$. The separability constraint implies that for each $a$ and $b$, these can be decomposed as $\Pi_{ab} = \sum_k \Pi_{ab}^{\text{A},k} \otimes \Pi_{ab}^{\text{B},k}$, where the $\Pi_{ab}^{\text{A},k}$ and $\Pi_{ab}^{\text{B},k}$ are positive operators~\cite{Rains:1997sep,Vedral:1998qent,Bennett:1999qnwe,Chitambar:2012te}. Thus:
\begin{align}
I^\rho(P_\text{SEP}) & =  \sum_{st} \ \beta^\rho_{st} \ P_\text{SEP}(1,1 \, | \, \ket{\varphi_s},\ket{\psi_t})  = \sum_k \sum_{st} \ \beta^\rho_{st} \ \text{tr}[ \ketbra{\varphi_s}{\varphi_s} \!\otimes\! \ketbra{\psi_t}{\psi_t} \cdot \Pi_{11}^{\text{A},k} \otimes \Pi_{11}^{\text{B},k} ] \nonumber \\
 &= \sum_k \ \text{tr}[ \one \!\otimes\! \Lambda^{\!*} (\ketbra{\xi}{\xi}) \cdot (\Pi_{11}^{\text{A},k} \otimes \Pi_{11}^{\text{B},k})^{\top} ] = \sum_k \ \text{tr}[ \ketbra{\xi}{\xi} \cdot (\Pi_{11}^{\text{A},k})^{\!\top} \!\!\otimes\! \Lambda\big((\Pi_{11}^{\text{B},k})^{\!\top}\big) ] \ \ge \ 0,
\end{align}
where the last inequality is due to the fact that for each $k$, $(\Pi_{11}^{\text{A},k})^{\!\top} \!\!\otimes\! \Lambda\big((\Pi_{11}^{\text{B},k})^{\!\top}\big)$ is a positive operator. 
This proves that inequality~\eqref{Eq:InequalityRho} indeed holds for separable measurements on $\ket{\varphi_s},\ket{\psi_t}$. Since, as recalled above, correlations obtained using LOCC are a subset of those obtained from separable measurements~\cite{Rains:1997sep,Vedral:1998qent,Bennett:1999qnwe,Chitambar:2012te}, then $I^\rho(P_\textup{LOCC}) \ge 0$ also holds for all LOCC correlations $P_\textup{LOCC}(a,b \, | \, \ket{\varphi_s},\ket{\psi_t})$.

\medskip

\paragraph*{(ii)}On the other hand, the quantum correlations $P_\rho(a,b \, | \, \ket{\varphi_s},\ket{\psi_t})$ can be computed using \eqref{Eq:P_QI_effective} and the effective POVM elements of~(\ref{Eq:POVM_A_canonical}--\ref{Eq:POVM_B_canonical}). One then obtains:
\begin{align}
I^\rho(P_\rho) \! & =  \sum_{st} \ \beta^\rho_{st} \ P_\rho(1,1 \, | \, \ket{\varphi_s},\ket{\psi_t}) = \sum_{st} \ \beta^\rho_{st} \ \text{tr}[\mathcal{A}^\textup{A}_{1|\ket{\varphi_s}} \otimes \mathcal{B}^\textup{B}_{1|\ket{\psi_t}} \cdot \rho] \nonumber \\
 & = \sum_{st} \ \frac{\beta^\rho_{st}}{d_A d_B} \ \text{tr}\left[ \ketbra{\varphi_s}{\varphi_s}^{\!\top\!} \!\otimes\! \ketbra{\psi_t}{\psi_t}^{\!\top\!} \cdot \rho\right] =  \frac{\text{tr}[\one \!\otimes\! \Lambda^{\!*} (\ketbra{\xi}{\xi}) \!\cdot\! \rho]}{d_A d_B} =  \frac{\sandwich{\xi}{\one \!\otimes\! \Lambda (\rho)}{\xi}}{d_A d_B} = \frac{\lambda}{d_A d_B} < 0 ,
\end{align}
violating inequality~\eqref{Eq:InequalityRho}, and  thus concluding the proof.
\end{proof}

Before providing explicit examples of such Bell-like inequalities~\eqref{Eq:InequalityRho}, we remark that they have, in general, two interesting properties.

First, let us emphasize that these inequalities always use sets of input states that render unambiguous quantum state discrimination impossible. As seen in Section~\ref{Sec:SimulationTasks}, QI simulation tasks offer richer scenarios specifically because they allow sets of non-orthogonal quantum input states. As also noted before, non-orthogonality by itself does not rule out the possibility to perform unambiguous state discrimination (USD)~\cite{Chefles:1998hd}. Now, if USD was possible, Alice (Bob) could learn the classical label $s$ (resp. $t$) of her state $\ket{\varphi_s}$ ($\ket{\psi_t}$) with non-zero probability. Then, for all $s$ and $t$, both Alice and Bob would obtain conclusive results in some rounds of the simulation task. Alice and Bob could then coordinate to output $a=b=1$ only when $(s,t)$ is known and $\beta_{st}$ is negative, filtering out non-negative contributions to the inequality. However, this strategy cannot be used against inequalities constructed in Theorem~\ref{thm1}: informationally complete sets of $d_A^2, d_B^2$ rank-1 projectors in dimension $d_A$, $d_B$ have corresponding states $\ket{\varphi_s}$, $\ket{\psi_t}$ linearly dependent in $\mathbbm{C}^{d_A}$, $\mathbbm{C}^{d_B}$, rendering USD impossible~\cite{Chefles:1998hd}.

We also note that the Bell-like inequality~\eqref{Eq:InequalityRho} can be used to certify the entanglement of any state that violates it (and in particular the entangled state $\rho$ for which it is constructed). It corresponds indeed to a Measurement-Device-Independent Entanglement Witness (MDI-EW), as defined in Ref.~\cite{GAP:2012mdiews}, in a scenario with (trusted) quantum inputs. Our proof here made use of positive but not completely positive maps, but can equivalently be based on the existence of entanglement witnesses~\cite{Horodecki:1996separability}, as in the proof of~\cite{GAP:2012mdiews}. Our present Theorem~\ref{thm1} thus implies that the MDI-EWs constructed in~\cite{GAP:2012mdiews} can be used to distinguish entangled states not only from separable states, but also from arbitrary LOCC resources.

\section{Non-simulability of entangled Werner states}
\label{Sec:WernerStates}
To provide explicit examples of Bell-like inequalities, we turn to bipartite Werner states. In dimension $d^2=2^2$, they are defined as mixtures of the singlet state and the maximally mixed state:
\begin{equation}
  \label{Eq:WernerStates2}
  \rho_2 = v \ \ketbra{\Psi^-}{\Psi^-} + (1-v) \ \mathbbm{1}/4, \quad v \in [-\tfrac{1}{3},1].
\end{equation}
and in arbitrary dimension $d^2$ as~\cite{Werner:1989zz,Barrett:2002gu}:
\begin{equation}
  \label{Eq:WernerStates}
  \rho_d = v \frac{\mathbbm{1}-F}{d(d-1)} + (1-v)\frac{\mathbbm{1}}{d^2}, \quad v \in \left[-\frac{d-1}{d+1},1\right],
\end{equation}
where the flip operator $F$ is $F=\sum_{ij} \ketbra{ij}{ji}$.

Werner states are entangled if and only if~\cite{Werner:1989zz} $v > \frac{1}{d+1}$ (for qubits $v > \frac{1}{3}$). For some values of $v$, entangled Werner states are local, in the sense that their correlations in standard Bell scenarios can be simulated using solely shared randomness: this is the case for correlations obtained from projective measurements~\cite{Werner:1989zz} for $v \le 1-\tfrac{1}{d}$ (qubits: $v \le \frac{1}{2}$), and for correlations obtained from POVMs~\cite{Barrett:2002gu} for $v\le \tfrac{3d-1}{d^2-1}\left(1-\tfrac{1}{d}\right)^d$ (qubits: $v \le \frac{5}{12}$). 
Let us also recall that correlations from entangled Werner states can be simulated in standard Bell scenarios by using only finite communication on average~\cite{Massar:2001ug}; remarkably, for qubits, only a single bit is required in the worst case~\cite{Toner:2003communication}.

In QI scenarios, in contrast, all entangled Werner states exhibit correlations that cannot be simulated using LOCC, as we shall see explicitly below, first for qubits and then in higher dimensions.

\subsection{Two-qubit Werner states}

When Alice and Bob share the state $\rho_2$, and use the inputs given in~\eqref{Eq:InputStates} corresponding to the measurements specified in Eqs.~\eqref{Eq:POVM_A_canonical}-\eqref{Eq:POVM_B_canonical}, their correlations are a mixture
\begin{equation}
  \label{Eq:WernerCorrelation2}
P_{\rho_2}(a,b \, | \, \ket{\varphi_s},\ket{\psi_t}) =  \, v \ P_{\ket{\Psi^-}}(a,b \, | \, \ket{\varphi_s},\ket{\psi_t})  + \, (1\!-\!v) \  P_0(a,b \, | \, \ket{\varphi_s},\ket{\psi_t})
\end{equation}
of the singlet correlations given in~\eqref{Eq:singlet_correlations} and noise  $P_0(a,b \, | \, \ket{\varphi_s},\ket{\psi_t}) = (3-2a)(3-2b)/16$.

Entangled Werner states have non-positive partial transposes~\cite{Peres:1996separability}; hence, the map $\Lambda$ introduced in our construction above can simply be taken to be the transposition (which is self-dual).
The four eigenvalues of the partial transpose $\rho_2^{\top_{\!\!B}}$ of $\rho_2$ are $\{ \frac{1-3v}{4}, \frac{1+v}{4}, \frac{1+v}{4}, \frac{1+v}{4} \}$, with the eigenvalue $\lambda = \frac{1-3v}{4}$ corresponding to the eigenvector $\ket{\xi_2} = \ket{\Phi_2^+} $. $\rho_2^{\top_{\!\!B}}$ thus has a negative eigenvalue $\lambda < 0$ if and only if $v > \frac{1}{3}$, which indeed corresponds to the necessary and sufficient condition for $\rho_2$ to be entangled. We solve~\eqref{Eq:defbetas} to obtain the $\beta^{\rho_2}_{st}$, with which we compute the value of the inequality~\eqref{Eq:InequalityRho} for the correlations~\eqref{Eq:WernerCorrelation2}:
\begin{eqnarray}
\label{Eq:CoefficientsWerner2}
\beta^{\rho_2}_{st} = \frac{6 \delta_{st} - 1}{8}, \quad I^{\rho_2}(P_{\rho_2}) = \frac{1-3v}{16},
\end{eqnarray}
making use of the Kronecker delta $\delta_{st}$. One obtains a violation $I^{\rho_2}(P_{\rho_2}) < 0$ of the Bell-like inequality~\eqref{Eq:InequalityRho} whenever $v > \frac{1}{3}$, i.e. for all entangled two-qubit Werner states, while all LOCC correlations satisfy $I^{\rho_2}(P_\text{LOCC}) \ge 0$: entangled two-qubit Werner states correlations~\eqref{Eq:WernerCorrelation2} cannot be simulated by classical resources in QI scenarios.

Note that the Bell-like inequality thus obtained is exactly the same as the first MDI-EW derived in Ref.~\cite{GAP:2012mdiews} to certify the entanglement of 2-qubit Werner states.

\subsection{Higher-dimensional Werner states}

The two-qubit example above generalizes readily to the $d \times d$ dimensional case. To simplify our computations, we consider input states $\left\{ \ket{\varphi_s} \right\}_{s=1}^{d^2}$ satisfying the requirement that 
\begin{equation}
\label{Eq:SICPOVM}
	\left| \braket{\varphi_{s'}}{\varphi_s}\right|^2 = \frac{\delta_{s'\!s} \, d + 1}{d+1},
\end{equation} 
and the same inputs for Bob, i.e., $\ket{\psi_t} = \ket{\varphi_t}$.
Condition~\eqref{Eq:SICPOVM} is satisfied by the quantum input states of Eq.~\eqref{Eq:InputStates} for $d=2$; sets of such states are given, mostly numerically, in Ref.~\cite{Scott:2010ew} for $d \le 67$, and are conjectured to exist for all dimensions~\cite{Scott:2010ew}. Using the identity
\begin{equation}
\label{Eq:SwapTraceIdentity}
\tr(F\cdot A\otimes B)=\tr(A\cdot B)
\end{equation}
for any $d\times d$ matrices $A$ and $B$, one can show that for $\rho_d$ and the measurements specified in Eqs.~\eqref{Eq:POVM_A_canonical}-\eqref{Eq:POVM_B_canonical}, one gets, from Eq.~\eqref{Eq:P_QI_effective}:
\begin{equation}
  \label{Eq:WernerCorrelation}
P_{\rho_d}(a,b \, | \, \ket{\varphi_s},\ket{\psi_t})\!=\!\frac{(d^2\!\!-\!\!1)^{3-\!a\!-\!b}+(\!-1)^{a\!+\!b} (1\!-\!d^2 \delta_{st})v}{d^4(d^2\!\!-\!\!1)}.
\end{equation}

The positive map $\Lambda$ can again be taken to be the transposition.
To obtain $\beta^{\rho_d}_{st}$ from~\eqref{Eq:defbetas}, we note that for every entangled $\rho_d$, i.e., for $v > \frac{1}{d+1}$, its partial transpose
\begin{equation}
\rho_d^{\top_{\!\!\text{B}}} = \left[ 1 + \frac{v}{d-1} \right] \frac{\mathbbm{1}}{d^2} - \frac{v}{d-1} \ketbra{\Phi_d^+}{\Phi_d^+}
\end{equation}
has a negative eigenvalue  $\lambda=\frac{1-v(d+1)}{d^2}$, corresponding to the  eigenstate $\ket{\xi_d} = \ket{\Phi_d^+}$.
Solving Eq.~\eqref{Eq:defbetas} with $\ketbra{\xi_d}{\xi_d}^{\!\top_{\!\!\text{B}}} = \frac{F}{d}$, we get 
\begin{eqnarray}
\label{Eq:CoefficientsWerner}
\beta^{\rho_d}_{st} = \frac{d(d+1)\delta_{st} - 1}{d^3},
\end{eqnarray}
which gives $I^{\rho_d}(P_{\rho_d}) = \frac{\lambda}{d^2}=\frac{1-v(d+1)}{d^4}<0$. Hence, one obtains a violation $I^{\rho_d}(P_{\rho_d})  < 0$ of the Bell-like inequality~\eqref{Eq:InequalityRho} whenever $v > \frac{1}{d+1}$, i.e., for all entangled Werner states.

\section{Conclusion}
Inspired by Buscemi's generalization of standard Bell scenarios, where the parties are provided external quantum input states, we have introduced the notion of quantum input simulation tasks. Within the framework of such tasks, we showed that if separated parties only have access to shared randomness and classical communications, the set of correlations that they can produce must satisfy some Bell-like inequalities that can however be violated by entangled quantum states, thus establishing a new feature of quantum entanglement. Our no-simulation result thus improves over those derived from standard Bell scenarios in two aspects: firstly, our inequalities can be violated by entangled states which do not violate any Bell inequality at the single-copy level. Secondly, Bell inequalities can always be violated using classical communication~\cite{Toner:2003communication,Degorre:2005simulating,Massar:2001ug}, whereas such classical resources never violate our inequalities.

In experimental setups, such a construction can be used to check that Alice and Bob share quantum resources (be it an entangled state or a quantum channel) when their measurements are not performed in a space-like separated way. The robustness of our scheme against imperfections in the preparation of the quantum inputs is left for future work; a brief discussion of the effect of losses can be found in~\cite{GAP:2012mdiews}.

From an information-theoretic perspective, our results reinforce the point that {\it all} entangled states have an edge over classical resources in their information processing capacities, in particular in the context of certain bipartite simulation tasks as considered here. Given the rich structure of multipartite entanglement~\cite{Horodecki:2009gb}, it would also certainly be desirable to understand how our results generalize to the multipartite scenario.

\subsection*{Acknowledgments}
We acknowledge useful discussions with Francesco Buscemi, Andreas Winter and Samuel Portmann, and are grateful to Antonio Ac\'in for suggesting the extension of our original result to all entangled states by using positive, but not completely positive maps. This work is supported by the Swiss NCCR ``Quantum Science and Technology",  the CHIST-ERA DIQIP, the European ERC-AG QORE and a UQ Postdoctoral Research Fellowship.


\begin{thebibliography}{99}

\bibitem{NielsenChuang} M. A. Nielsen and I. L. Chuang, {\em Quantum Computation and Quantum Information} (Cambridge Univ. Press, New York, 2000).
\bibitem{qms} G. Brassard, R. Cleve, and A. Tapp, Phys. Rev. Lett. {\bf 83}, 1874 (1999). 
\bibitem{Buhrman:2010nonlocality} H.~Buhrman, R.~Cleve, S.~Massar, and R.~de~Wolf, Rev. Mod. Phys. {\bf 82}, 665 (2010). 
\bibitem{Chitambar:2012te} E.~Chitambar, D.~Leung, L.~Mancinska, M.~Ozols, and A.~Winter, preprint arXiv:1210.4583 (2012).
\bibitem{Horodecki:2009gb} R.~Horodecki, P.~Horodecki, M.~Horodecki, and K.~Horodecki, Rev. Mod. Phys. {\bf 81}, 865 (2009). 
\bibitem{Maudlin:1992bell} T. Maudlin, {\em PSA: Proceedings of the Biennial Meeting of the Philosophy of Science Association} (The University of Chicago Press, 1992), pp. 404-417.
\bibitem{Bell:1964} J. S. Bell, Physics {\bf 1}, 195 (1964).
\bibitem{Bell:1987speakable} J. S. Bell, {\em Speakable and Unspeakable in Quantum Mechanics} (Cambridge University Press, 1987).
\bibitem{Brassard:2003bc} G. Brassard, Found. Phys. {\bf 33}, 1593 (2003).
\bibitem{Pitowsky:1989quantum} I. Pitowsky, {\em Quantum Probability, Quantum Logic} (Springer, 1989).
\bibitem{Toner:2003communication} B. Toner and D. Bacon, Phys. Rev. Lett. {\bf 91}, 187904 (2003).
\bibitem{Degorre:2005simulating} J. Degorre, S. Laplante, and J. Roland, Phys. Rev. A {\bf 72}, 062314 (2005). 
\bibitem{Massar:2001ug} S. Massar, D. Bacon, N. Cerf, and R. Cleve, Phys. Rev. A {\bf 63}, 052305 (2001). 
\bibitem{Regev:2010hp} O.~Regev and B.~Toner, SIAM J. Comput. {\bf 39(4)}, 1562-1580 (2010). 
\bibitem{Vertesi:2009ew} T.~V{\'e}rtesi and E.~Bene, Phys. Rev. A {\bf 80}, 062316 (2009). 
\bibitem{Greenberger:1989going} D. M. Greenberger, M. A. Horne, and A. Zeilinger, in {\em Bell's Theorem, Quantum Theory, and Conceptions of the Universe}, edited by M. Kafatos (Kluwer, Dordrecht, 1989), pp. 69-72.
\bibitem{Bancal:2010ho} J.-D. Bancal, C. Branciard, and N. Gisin, Adv. Math. Phys. {\bf 2010}, 293245 (2010). 
\bibitem{Branciard:2011dx} C. Branciard and N. Gisin, Phys. Rev. Lett. {\bf 107}, 020401 (2011). 
\bibitem{Zukowski:1993fs} M. \.{Z}ukowski, A. Zeilinger, M. A. Horne, and A. K. Ekert, Phys. Rev. Lett. {\bf 71}, 4287 (1993). 
\bibitem{Branciard:1429040} C. Branciard, N. Brunner, H. Buhrman, R. Cleve, N. Gisin, S. Portmann, D. Rosset, and M. Szegedy, Phys. Rev. Lett., {\bf 109}, 100401 (2012). 
\bibitem{Werner:1989zz} R. F. Werner, Phys. Rev. A {\bf 40}, 4277 (1989). 
\bibitem{Barrett:2002gu} J. Barrett, Phys. Rev. A {\bf 65}, 042302 (2002). 
\bibitem{Popescu:1994bell} S. Popescu, Phys. Rev. Lett. {\bf 72}, 797 (1994). 
\bibitem{Masanes:2008hidden} Ll. Masanes, Y.-C. Liang, and A. C. Doherty, Phys. Rev. Lett. {\bf 100}, 090403 (2008). 
\bibitem{Liang:2012} Y.-C. Liang, Ll. Masanes,  and D. Rosset, Phys. Rev. A {\bf 86}, 052115 (2012).
\bibitem{Liang:2006fe} Y.-C. Liang and A. Doherty, Phys. Rev. A {\bf 73}, 052116 (2006). 
\bibitem{Palazuelos:2012fz} C.~Palazuelos, Phys. Rev. Lett. {\bf 109}, 190401 (2012). 
\bibitem{Cavalcanti:2012vg} D.~Cavalcanti, A.~Ac{\'i}n, N.~Brunner, and T.~Vertesi, preprint arXiv:1207.5485 (2012). 
\bibitem{Buscemi:2012all} F. Buscemi, Phys. Rev. Lett. {\bf 108}, 200401 (2012). 
\bibitem{Clauser:1969ff} J. F. Clauser, M. A. Horne, A. Shimony, and R. A. Holt, Phys. Rev. Lett. {\bf 23}, 880 (1969). 
\bibitem{Norsen:2006di} T. Norsen, Found. Phys. Lett. {\bf 19}, 633 (2006). 
\bibitem{Hall:2010hc} M~ J.~W. Hall, Phys. Rev. Lett. {\bf 105}, 250404 (2010). 
\bibitem{Chefles:1998hd} A. Chefles, Phys. Lett. A {\bf 239}, 339 (1998). 
\bibitem{Horodecki:1996separability} M. Horodecki, P. Horodecki, and R. Horodecki, Phys. Lett. A {\bf 223}, 1 (1996). 
\bibitem{Schmidt:1907gt} E. Schmidt, Math. Ann. {\bf 63}, 433 (1907). 
\bibitem{Ekert:1995bw} A. Ekert and P. L. Knight, Am. J. Phys. {\bf 63}, 415 (1995). 
\bibitem{Rains:1997sep} E.~M.~Rains, e-print arXiv:quant-ph/9707002 (1997).
\bibitem{Vedral:1998qent} V.~Vedral, M.~B.~Plenio, M.~A.~Rippin, and P.~L.~Knight, Phys. Rev. Lett. {\bf 78}, 2275 (1997).
\bibitem{Bennett:1999qnwe} C.~H.~Bennett, D.~P.~DiVincenzo, C.~A.~Fuchs, T.~Mor, E.~Rains, P.~W.~Shor, J.~A.~Smolin, and W.~K.~Wootters, Phys. Rev. A {\bf 59}, 1070 (1999).
\bibitem{GAP:2012mdiews} C.~Branciard, D.~Rosset, Y.-C.~Liang, and N.~Gisin, Phys. Rev. Lett. {\bf 110}, 060405 (2013). 
\bibitem{Peres:1996separability} A. Peres, Phys. Rev. Lett. {\bf 77}, 1413 (1996). 
\bibitem{Scott:2010ew} A. J. Scott and M. Grassl, J. Math. Phys. {\bf 51}, 042203 (2010). 

\end{thebibliography}
\end{document}